\documentclass[10pt,twocolumn]{article}

\usepackage[T1]{fontenc}
\usepackage{lmodern}
\usepackage[margin=1.8cm,columnsep=0.6cm]{geometry}
\usepackage{amsmath,amssymb,amsthm}
\usepackage{bm}
\usepackage{booktabs}
\usepackage{array}
\usepackage{multirow}
\usepackage{algorithm}
\usepackage{algpseudocode}
\usepackage{graphicx}
\usepackage{xcolor}
\usepackage{hyperref}
\usepackage{microtype}
\usepackage{titlesec}
\usepackage{enumitem}
\usepackage{caption}
\usepackage{balance}
\usepackage{authblk}

\hypersetup{colorlinks=true,linkcolor=blue!60!black,
  citecolor=green!50!black,urlcolor=blue!70!black}

\titleformat{\section}{\normalsize\bfseries}
  {\thesection.}{0.4em}{}
\titleformat{\subsection}{\normalsize\bfseries}{\thesubsection.}{0.4em}{}
\titleformat{\subsubsection}{\normalsize\itshape}
  {\thesubsubsection.}{0.4em}{}
\titlespacing{\section}{0pt}{6pt}{3pt}
\titlespacing{\subsection}{0pt}{4pt}{2pt}
\titlespacing{\subsubsection}{0pt}{3pt}{1pt}

\newtheorem{proposition}{Proposition}

\newcommand{\ket}[1]{|#1\rangle}
\newcommand{\bra}[1]{\langle #1|}
\newcommand{\expect}[1]{\langle #1 \rangle}
\newcommand{\norm}[1]{\left\|#1\right\|}
\newcommand{\abs}[1]{\left|#1\right|}
\newcommand{\tr}{\operatorname{Tr}}
\newcommand{\calL}{\mathcal{L}}
\newcommand{\calS}{\mathcal{S}}
\newcommand{\calP}{\mathcal{P}}
\newcommand{\calN}{\mathcal{N}}
\newcommand{\btheta}{\boldsymbol{\theta}}
\newcommand{\bDelta}{\boldsymbol{\Delta}}
\newcommand{\Utrue}{U_{\text{true}}}
\newcommand{\Happrox}{\hat{H}}
\newcommand{\Htrue}{H_{\text{true}}}

\setlist{noitemsep,topsep=2pt,parsep=0pt,partopsep=0pt}

\begin{document}


\twocolumn[{%
\begin{@twocolumnfalse}
\begin{center}
  {\LARGE\bfseries Structure-Agnostic Unitary Learning from Quantum Observable
   Dynamics\\[4pt]
   with Application to Hamiltonian Identification}\\[10pt]
  {\large Berkani Mohamed}\\[2pt]
  {\normalsize University Ferhat Abbas of Setif 1\quad
   \texttt{
mohamed.berkani2346@etu.univ-setif.dz}}\\[4pt]
  {\small\today}
\end{center}
\begin{abstract}
\noindent
We present a variational algorithm for \textbf{learning an unknown quantum
unitary from time-series observable measurements}, with no structural
assumption about the target. The core separation: a hardware-efficient
parametrised circuit learns the evolution operator $U$ via observable
matching; Hamiltonian identification follows as classical post-processing via
matrix logarithm, when the target happens to be $e^{-iH\tau}$.

Three experiments establish the method's generality. First, a
\textbf{noiseless proof of correctness} with exact gradients (L-BFGS-B)
achieves MSE $1.61\!\times\!10^{-14}$ and recovers all Hamiltonian
coefficients to six decimal places. Second, a \textbf{gate-learning
experiment} fits CNOT, iSWAP, and a Haar-random SU(4) element --- none
generated by any fixed Hamiltonian --- all to process fidelity $1.000000$,
confirming the method does not rely on Trotterisation structure. Third,
\textbf{quantum deployment} via SPSA-Adam under Qiskit Aer depolarising
noise ($p_1{=}0.001$, $p_2{=}0.01$, $N_\text{shots}{=}1024$) recovers all
three Ising Hamiltonian terms with errors below $8\%$.

The optimiser, SPSA-Adam, combines SPSA's hardware-efficient two-point
gradient estimation with Adam's adaptive moment updates. A four-stage
moment-warm curriculum progressively extends the training horizon, converting
a global non-convex problem into a sequence of well-posed local ones.

\end{abstract}
\vspace{6pt}\hrule\vspace{10pt}
\end{@twocolumnfalse}
}]

\section{Introduction}

Most variational Hamiltonian learning methods optimise Pauli
coefficients directly~\cite{ref14,ref15,ref17}. This conflates two
distinct problems: a quantum optimisation problem (find the unitary
matching the observed dynamics) and a classical identification problem
(decompose that unitary into physical interaction terms). Conflating
them forces a structural assumption upfront --- the practitioner must
specify which Pauli terms are present before training begins.

We decouple the two problems. The quantum optimiser learns a unitary $U$
from observable time-series data, with no assumption on its generator.
Hamiltonian identification is then free classical post-processing: compute
$(i/\tau)\log U$ and project onto the Pauli basis. If the target is not
Hamiltonian-generated at all --- say it is a CNOT or a Haar-random SU(4)
gate --- the first phase still works; the second phase simply does not apply.

This reframing has two concrete consequences. Any gate learning task maps
directly onto the same algorithm without modification. And structure
discovery is not an input to the method but an output: the algorithm
recovers which Pauli terms are present, how many there are, and what
symmetries they respect, purely from measurement data.

Prior work addresses Hamiltonian learning from several directions.
Locality- and sparsity-based methods achieve polynomial measurement scaling
under short-range interaction assumptions~\cite{ref1,ref2,ref3,ref5,ref6,ref7,ref8}.
Thermal-state methods exploit free-energy convexity~\cite{ref3,ref9}.
Machine learning approaches use neural ODEs, recurrent networks, or
transformers~\cite{ref12,ref13,ref4,ref6}. Variational approaches train
parametrised circuits to approximate eigenvectors~\cite{ref14} or match
observed dynamics~\cite{ref15,ref17}. The work closest to ours is
Gupta~et~al.~\cite{ref17}, who recover Hamiltonian coefficients from
time-series data without ground state preparation. Their method optimises
coefficients directly, requires prior knowledge of the Pauli structure,
and treats all timesteps uniformly --- making convergence sensitive to the
time window. We address all three limitations.

This work extends and generalises the variational Hamiltonian learning framework developed in the first author's Master's dissertation at Université Ferhat Abbas Sétif 1~\cite{berkani2026}

\paragraph{Contributions.}
\begin{enumerate}
  \item \textbf{Unitary learning framing.} The quantum optimiser learns $U$,
    not Pauli coefficients. Hamiltonian identification is post-processing and
    is entirely optional.
  \item \textbf{Identifiability condition.} We state the precise conditions
    under which $U$ is uniquely determined by observable data.
  \item \textbf{Gate learning.} CNOT, iSWAP, and a Haar-random SU(4) element
    all learned to process fidelity $1.000000$, confirming the method
    generalises beyond Hamiltonian-generated unitaries.
  \item \textbf{SPSA-Adam.} Two circuit evaluations per gradient step,
    regardless of parameter count.
  \item \textbf{Moment-warm curriculum.} Four-stage schedule with Adam
    moments carried across stages.
\end{enumerate}

\section{Background}
\subsection{The Pauli Basis and Uniqueness of Hamiltonian Decomposition}

Let $\{P_\alpha\}_{\alpha=1}^{4^n}$ be the $n$-qubit Pauli operators, i.e.\
all tensor products of $\{I,X,Y,Z\}$ over $n$ qubits. Using
$\tr(\sigma_i\sigma_j)=2\delta_{ij}$ for the single-qubit Paulis and
multiplicativity of the trace over tensor factors,
\begin{equation}
  \tr(P_\alpha P_\beta) = 2^n \delta_{\alpha\beta}.
\end{equation}
The real vector space $\mathrm{Herm}(2^n)$ has dimension $4^n$; the $4^n$
Pauli operators are pairwise orthogonal under this trace inner product,
hence linearly independent, hence a basis. Every Hermitian operator
$A\in\mathrm{Herm}(2^n)$ -- in particular every Hamiltonian $H$ -- therefore
has a unique expansion
\begin{equation}
  H = \sum_\alpha c_\alpha P_\alpha,\qquad
  c_\alpha = \frac{1}{2^n}\tr(H P_\alpha).
  \label{eq:pauli-decomp}
\end{equation}
Uniqueness follows by multiplying \eqref{eq:pauli-decomp} by $P_\beta$,
taking the trace, and applying orthogonality:
$\tr(HP_\beta)=\sum_\alpha c_\alpha\tr(P_\alpha P_\beta)=2^n c_\beta$, so each
$c_\beta$ is pinned down by $H$ and cannot take any other value. Once
$\Happrox$ is recovered as an operator, its decomposition into interaction
terms is therefore fixed outright: no separate fitting step, and no prior
assumption about which terms are present, is needed to obtain $\{c_\alpha\}$.

\subsection{Observable Matching and Identifiability}

Let $U \in \mathrm{U}(2^n)$ be an unknown unitary. For initial state
$\ket{s}$, observable $P$, and integer $n$, define
\begin{equation}
  f_{s,P,n}(U) = \bra{s} U^{\dagger n} P\, U^n \ket{s}.
\end{equation}
We say the data $\{f_{s,P,n}(U)\}$ \emph{identifies} $U$ if no other
unitary $W \not\approx U$ produces the same observations.

\begin{proposition}[Identifiability]
\label{prop:ident}
The map $U \mapsto \{f_{s,P,n}(U)\}_{s \in \calS, P \in \calP}$
is injective (up to global phase) if the set
$\{\ket{s}\bra{s}\}_{s \in \calS}$ spans $\mathcal{B}(\mathcal{H})$
and $\calP$ spans $\mathrm{Herm}(2^n)$.
For $n{=}2$ qubits, $|\calS|{=}4$ informationally
complete states and $|\calP|{=}9$ Pauli observables suffice.
\end{proposition}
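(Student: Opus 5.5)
The plan is to use only the single-step data ($n=1$ in the exponent; I write $k$ for the time index to avoid the clash with the qubit number) and to proceed in three moves. First, linearise. Write $\rho_s=\ket{s}\bra{s}$, so that $f_{s,P,1}(U)=\tr(\rho_s\,U^\dagger P U)$. Suppose $W$ produces the same data as $U$. Then
\begin{equation*}
\tr\bigl(\rho_s\,(U^\dagger P U - W^\dagger P W)\bigr)=0
\end{equation*}
for every $s\in\calS$ and $P\in\calP$. Since $\{\rho_s\}$ spans $\mathcal{B}(\mathcal{H})$ and the trace pairing $(A,B)\mapsto\tr(AB)$ is nondegenerate, the operator $U^\dagger P U - W^\dagger P W$ vanishes for each $P\in\calP$. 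The map $P\mapsto U^\dagger P U - W^\dagger P W$ is real-linear, and $\calP$ spans $\mathrm{Herm}(2^n)$. Hence $U^\dagger A U = W^\dagger A W$ for all Hermitian $A$, and by complexification for all $A\in\mathcal{B}(\mathcal{H})$.

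Second, I would reduce to a commutant statement. Set $V=WU^\dagger$, which is unitary. The identity above rewrites as $VA=AV$ for every $A\in\mathcal{B}(\mathcal{H})$. The commutant of the full matrix algebra consists of the scalars; this follows from Schur's lemma, or directly by testing against the matrix units $E_{ij}$. Hence $V=e^{i\phi}I$, that is, $W=e^{i\phi}U$. This establishes injectivity up to global phase. I would remark that the data at $k>1$ are then automatically consistent, so the extra time steps only add redundancy (useful for noise averaging) and are not needed for identifiability.

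Third, the explicit two-qubit count, which I expect to be the real obstacle. Taken literally, the hypotheses of the first sentence cannot be met with those numbers. Four projectors span at most a 4-dimensional subspace of the $16$-dimensional space $\mathcal{B}(\mathcal{H})$. Nine Paulis span at most a 9-dimensional subspace of $\mathrm{Herm}(4)$.

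The first thing I would try is to exploit the time series $k=1,\dots,K$. One would then argue that the evolved states $U^k\ket{s}$, paired with the observables, generate enough independent constraints. The difficulty is that this enlarged family depends on the unknown $U$, so the linear-algebra argument above no longer applies verbatim. One would need a genericity or polynomial-identity argument over $\mathrm{U}(4)$, for example showing that the set of pairs $(U,W)$ agreeing on all $k$ has the expected dimension. I also expect counterexamples to exist for special $U$, such as $U$ diagonal in a basis containing the $\ket{s}$.

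The fallback is to restate the second sentence correctly. A full-rank choice needs $16$ states whose projectors span $\mathcal{B}(\mathcal{C}^4)$, for example the tensor products of a single-qubit informationally complete set of size $4$. It also needs the $15$ nontrivial Paulis; the identity can be dropped because it gives trace-preservation information only. Alternatively, one can observe that the $9$ Paulis $\sigma_a\otimes\sigma_b$ with $a,b\in\{X,Y,Z\}$, together with the local ones, exhaust the traceless part. I would therefore present the general statement with the argument above, and flag the numerical claim as requiring $|\calS|=16$, $|\calP|=15$, unless a multi-time-step genericity argument can be completed.
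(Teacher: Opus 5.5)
Your proof of the general criterion is the paper's argument. Spanning projectors pin down each $U^\dagger P U$, spanning observables then pin down the conjugation map, and conjugation determines the unitary up to phase, using only single-step data. You justify that last step through the commutant, which the paper merely asserts. The paper's proof never addresses the two-qubit clause, and your count is right: four rank-one projectors and nine Paulis cannot meet the stated hypotheses, so that clause is not an instance of the proposition.

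Two sharpenings of your third step follow.

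First, the multi-time-step rescue cannot work in a natural case. Suppose every $\ket{s}$ is real in the computational basis and every $P\in\calP$ has an even number of $Y$ factors. The nine such Paulis are $XI,IX,ZI,IZ,XX,XZ,ZX,ZZ,YY$; they include every observable in the paper's validation table, and the states used there, $\ket{00}$ and $\ket{++}$, are real. Then $P$ and $\ket{s}\bra{s}$ are real matrices. Since each $f$ is real,
\[
f_{s,P,k}(U)=\overline{f_{s,P,k}(U)}=\bra{s}(\bar U^{\dagger})^{k}P\,\bar U^{k}\ket{s}=f_{s,P,k}(\bar U)\quad\text{for all }k .
\]
So $U$ and its entrywise conjugate $\bar U$ give identical data at every time step, while $\bar U\neq e^{i\phi}U$ for generic $U$. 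For $U=e^{-i\Htrue\tau}$ with $\Htrue$ real, $\bar U=e^{+i\Htrue\tau}$, so the data cannot separate $\Htrue$ from $-\Htrue$. This obstruction holds for generic $U$. Your diagonal example is weaker: it only applies when the states split into mutually orthogonal groups.

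Second, your fallback of $16$ states and $15$ Paulis overshoots on both sides.
\begin{enumerate}
  \item With spanning states, $\calP$ need only generate $\mathcal{B}(\mathcal{H})$ as an algebra, for example $XI,ZI,IX,IZ$. Then $V=WU^\dagger$ commutes with a generating set and is therefore scalar.
  \item With all $15$ Paulis, four states suffice if they are linearly independent and cannot be split into two mutually orthogonal groups, for example $\ket{00},\ket{01},\ket{10},\ket{++}$. Tomography of the outputs gives $W\ket{s}=e^{i\phi_s}U\ket{s}$, so every $\ket{s}$ is an eigenvector of the unitary $U^\dagger W$. Eigenvectors of a unitary with distinct eigenvalues are orthogonal, so all $\phi_s$ coincide and $U^\dagger W=e^{i\phi}I$.
\end{enumerate}
Thus $16$ and $15$ are what the proposition's sufficient condition demands, not what identifiability demands.
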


\begin{proof}
Since $\calP$ spans $\mathrm{Herm}(2^n)$, matching
$\bra{s}U^{\dagger n}P_\alpha U^n\ket{s}$ for every $P_\alpha \in \calP$
and every spanning $\ket{s}$ fixes the operator
$U^{\dagger n}P_\alpha U^n$ for all $\alpha$, and hence fixes the linear
map $A \mapsto U^{\dagger n} A U^n$ on all of $\mathrm{Herm}(2^n)$.
Conjugation by a unitary determines that unitary up to a global phase, so
this fixes $U^n$ up to phase; taking $n{=}1$ recovers $U$ itself up to
phase.
\end{proof}

By \eqref{eq:pauli-decomp}, once $U(\btheta^*)$ is identified in this
sense, any generator it may possess is identified along with it, and its
Pauli decomposition follows directly from \eqref{eq:pauli-decomp} --- no
further identifiability argument is needed for the coefficients
themselves.

\subsection{Multi-Step Consistency as the Core Constraint}

Training on multiple timesteps $n = 1, 2, \ldots, N$ imposes that the
\emph{same} circuit $U(\btheta)$ satisfies
$U(\btheta)^n \approx e^{-iHn\tau}$ simultaneously. This is strictly
stronger than fitting $N$ independent unitaries: it forces the learned
operator to have a consistent generator across all timescales. A single
good step $U(\btheta) \approx e^{-iH\tau}$ implies all longer times;
conversely, any error in the single-step approximation grows as
$n\norm{\Delta U}$ after $n$ compositions. This consistency constraint
is what makes the matrix logarithm in post-processing well-defined and
gives Hamiltonian recovery its exactness guarantee in the noiseless limit.

\subsection{Hamiltonian Extraction and Branch Ambiguity}

When the target is Hamiltonian-generated, training drives
$U(\btheta^*) \approx e^{-iH\tau}$, and $H$ is recovered by inverting the
exponential map. This inversion is not, in general, well defined: the map
$H \mapsto e^{-iH\tau}$ is $2\pi$-periodic in each eigenvalue of $H\tau$,
so any two Hamiltonians whose spectra differ by an integer multiple of
$2\pi/\tau$ produce an identical propagator. The ordinary matrix logarithm
of $U(\btheta^*)$ is therefore multi-valued \cite{higham2008}, and computing it requires
choosing a branch.We use the principal branch, whose eigenvalues lie in $(-\pi,\pi)$:
\begin{equation}
  \Happrox = \frac{i}{\tau}\log U(\btheta^*),\quad
  \Happrox \leftarrow \tfrac{1}{2}(\Happrox + \Happrox^\dagger).
\end{equation}
If $U(\btheta^*) = e^{-iH\tau}$ exactly and every eigenvalue $\lambda_k$ of
$H$ satisfies $|\lambda_k|\tau < \pi$, then $-iH\tau$ already lies inside
the principal branch, so the principal logarithm returns $-iH\tau$
exactly and $\Happrox = H$.Outside this window, the principal branch no longer coincides with the true generator: phase wrapping can shift one or more eigenvalues by integer multiples of $2\pi/\tau$ , yielding a different Hamiltonian that generates the same propagator. This is the branch condition referred to throughout the
experiments --- we choose $\tau$ small enough that
$\norm{H}\tau < \pi$ (here $\norm{\Htrue}_F\tau \approx 0.8 < \pi$) and
verify post hoc that the recovered eigenvalues respect it.

With the branch resolved, $c_\alpha = \tr(\Happrox P_\alpha)/2^n$ follows
from \eqref{eq:pauli-decomp}; no additional quantum measurements are
required.

\subsection{SPSA and Adam}

SPSA~\cite{spall1992} estimates gradients with only 2 circuit evaluations:
\begin{equation}
  \hat{g}_k =
  \frac{\calL(\btheta{+}c_k\bDelta)-\calL(\btheta{-}c_k\bDelta)}
       {2c_k\bDelta},\quad\bDelta\sim\{\pm1\}^d.
\end{equation}
Adam~\cite{kingma2015} applies bias-corrected adaptive updates:
\begin{equation}
  \btheta_{k+1}=\btheta_k-\alpha\,\hat{m}_k/(\sqrt{\hat{v}_k}+\varepsilon).
\end{equation}
The $1/\sqrt{\hat{v}_k}$ factor normalises parameters with different
gradient scales (rotation layers vs.\ entanglement layers in the PQC),
while the first moment $m_k$ smooths high-variance SPSA estimates,
boosting effective gradient SNR by $\sqrt{10}\approx3.2\times$ for
$\beta_1{=}0.9$ at zero additional circuit cost.

\section{Method}

\subsection{Ansatz Circuit}

A 2-qubit hardware-efficient ansatz, equivalent to Qiskit's
\texttt{EfficientSU2} circuit~\cite{kandala2017,qiskit2023} with $R_Y,R_Z$ rotation gates,
linear entanglement, and $L{=}3$ repetitions, giving
$N_\theta = 4+8L = 28$ parameters. Structure: initial $R_YR_Z$ layer, then
$L$ blocks each containing CNOT(1$\to$0), $R_YR_Z$, CNOT(0$\to$1), $R_YR_Z$.
Identity initialisation: $U(\mathbf{0})=I$ verified exactly.
\subsection{Observable Matching Loss}

Following the work of Gupta et al\cite{ref17} , we define the training loss:
\begin{equation}
  \calL(\btheta)=\frac{1}{\abs{\calS}\abs{\calP}\abs{\calN}}
  \sum_{s,P,n}\!\bigl(\expect{P}^\text{circ}_{n,s}(\btheta)
  -\expect{P}^\text{true}_{n,s}\bigr)^2.
\end{equation}
This fits matrix elements of $U(\btheta)^n$ in the Heisenberg picture. By
Proposition~\ref{prop:ident}, the loss is zero if and only if
$U(\btheta) = e^{i\phi}\Utrue$ for some global phase $\phi$.

\subsection{SPSA-Adam}

\begin{algorithm}[H]
\caption{SPSA-Adam}
\label{alg:spsa-adam}
\small
\begin{algorithmic}[1]
\Require $\calL,\btheta_0,K,\alpha,c_0,\gamma,\beta_1,\beta_2,\varepsilon$;
         opt.\ $m_0,v_0$
\State $m{\gets}m_0$;\; $v{\gets}v_0$;\; best${\gets}\infty$
\For{$k=1,\ldots,K$}
  \State $c_k{\gets}c_0/k^\gamma$;\;
         $\bDelta{\sim}\{\pm1\}^d$
  \State $\calL_\pm{\gets}\calL(\btheta{\pm}c_k\bDelta)$
         \Comment{2 batched jobs}
  \State $\hat{g}{\gets}(\calL_+-\calL_-)/(2c_k\bDelta)$
  \State $m{\gets}\beta_1 m{+}(1{-}\beta_1)\hat{g}$;\;
         $v{\gets}\beta_2 v{+}(1{-}\beta_2)\hat{g}^2$
  \State $\hat{m}{\gets}m/(1{-}\beta_1^k)$;\;
         $\hat{v}{\gets}v/(1{-}\beta_2^k)$
  \State $\btheta{\gets}\btheta-\alpha\hat{m}/(\sqrt{\hat{v}}{+}\varepsilon)$
  \If{$k\bmod10{=}0$} update best \EndIf
\EndFor
\State\Return $\btheta^*$, best, $m$, $v$
\Comment{moments for warm-start}
\end{algorithmic}
\end{algorithm}

\noindent\textbf{Hyperparameters:} $\alpha{=}0.01$, $c_0{=}0.05$,
$\gamma{=}0.101$, $\beta_1{=}0.9$, $\beta_2{=}0.999$,
$\varepsilon{=}10^{-8}$.

Adam moments $m,v$ are carried from each curriculum stage to the next,
preserving accumulated curvature information and avoiding cold-start
overhead at stage transitions.

\subsection{Curriculum Schedule}

Training over all timesteps simultaneously often fails to converge, as demonstrated by the comparison experiments presented in Section~\ref{sec:curriculum_comparison}: Long-time terms dominate the gradient before the short-time behaviour is learned, making optimisation difficult. This behaviour is consistent with the barren-plateau and local-trap phenomena reported in\cite{mcclean2018, nemkov2023}. We stage training by progressively extending the
active timestep set\cite{bengio2009} , warm-starting each stage from the previous solution.

\begin{table}[ht]
\centering
\caption{Curriculum schedule used in quantum simulation experiments.}
\label{tab:curriculum}
\small
\begin{tabular}{clcc}
\toprule
Stage & Active steps $n$ & Max $t$ & Iters \\
\midrule
1 & $\{1,\, 2\}$                       & $2\tau$  & 400 \\
2 & $\{1,\, 2,\, 4,\, 6\}$            & $6\tau$  & 600 \\
3 & $\{1,\, 2,\, 4,\, 6,\, 8,\, 10\}$ & $10\tau$ & 800 \\
4 & $\{1,\, 2,\, 4,\, 6,\, 8,\, 10,\, 14,\, 20\}$
                                        & $20\tau$ & 1000 \\
\bottomrule
\end{tabular}
\end{table}

All circuits for one loss evaluation are submitted in a single
\texttt{simulator.run()} call, reducing job overhead to exactly 2
circuit evaluations per iteration regardless of $|\calS||\calP||\calN|$.

\section{Noiseless Proof of Correctness (2-Qubit)}

Before quantum deployment, we verify the mathematical correctness of the method in the noiseless setting. This ensures that the effects observed in Section~\ref{sec:quantum}  arise from the introduction of hardware noise rather than from the underlying algorithm.
\paragraph{Setup.}
$\Htrue = 1.0{\cdot}ZZ + 0.5{\cdot}XI + 0.5{\cdot}IX$, $\tau{=}0.05$,
L-BFGS-B with exact matrix exponentiation predictions,
4 states $\times$ 15 Pauli observables $\times$ 8 timesteps.
Every stage converged on the first restart.

The noiseless proof uses a multi-restart strategy with up to 8 restarts per stage and stage-level patience stopping, rather than the fixed iteration counts in Table~\ref{tab:curriculum}

\begin{table}[ht]
\centering
\caption{Noiseless curriculum: final MSE per stage (2-qubit).}
\label{tab:classical_stages}
\small
\begin{tabular}{clc}
\toprule
Stage & Active steps & Final MSE \\
\midrule
1 & $\{1,2\}$               & $2.44\times10^{-15}$ \\
2 & $\{1,2,4,6\}$           & $8.13\times10^{-15}$ \\
3 & $\{1,2,4,6,8,10\}$      & $2.02\times10^{-14}$ \\
4 & $\{1,\ldots,20\}$       & $1.61\times10^{-14}$ \\
\bottomrule
\end{tabular}
\end{table}

Final metrics: $\norm{\Happrox-\Htrue}_F = 1.07\times10^{-6}$,
$\norm{U^*-\Utrue}_F = 5.33\times10^{-8}$,
mean multi-step error $= 4.89\times10^{-8}$,
generalisation error on unseen $n\in\{3,5,7,11,15,25\}$
(including extrapolation to $t{=}1.25 > t_{\max}{=}1.0$)
$= 5.81\times10^{-8}$,
round-trip error $= 8.09\times10^{-16}$ (machine precision),
Hermiticity error $= 0$.

All three Hamiltonian terms recover to six decimal places
(Table~\ref{tab:classical_H}).

\begin{table}[ht]
\centering
\caption{Noiseless Hamiltonian extraction (2-qubit).}
\label{tab:classical_H}
\small
\begin{tabular}{lccc}
\toprule
Term & Discovered & True & Error \\
\midrule
$ZZ$ & $1.000000$ & $1.0000$ & ${<}10^{-6}$ \\
$XI$ & $0.500000$ & $0.5000$ & ${<}10^{-6}$ \\
$IX$ & $0.500000$ & $0.5000$ & ${<}10^{-6}$ \\
\bottomrule
\end{tabular}
\end{table}

\section{Gate Learning: Beyond Hamiltonian-Generated Unitaries}

The primary claim is unitary learning, not Hamiltonian learning. To test
it directly, we fit the same ansatz to three targets that are
\emph{not} $e^{-iH\tau}$ for any fixed $H$: CNOT, iSWAP, and a
Haar-random SU(4) element. In each case the circuit is trained on a
single application ($n{=}1$) using 8 random initial states and 5
observables each, with L-BFGS-B and exact predictions.

\begin{table}[ht]
\centering
\caption{Gate learning results. All targets reach process fidelity
$1.000000$. Phase-corrected operator error and generalisation error
(8 unseen random states, 5 observables) are shown.}
\label{tab:gate_learning}
\small
\begin{tabular}{lp{1.4cm}cc}
\toprule
Target & Fidelity & Op.\ err & Gen.\ err \\
\midrule
CNOT         & $1.000000$ & $1.18{\times}10^{-7}$ & $1.98{\times}10^{-8}$ \\
iSWAP        & $1.000000$ & $3.35{\times}10^{-8}$ & $8.16{\times}10^{-9}$ \\
Random SU(4) & $1.000000$ & $4.67{\times}10^{-8}$ & $1.30{\times}10^{-8}$ \\
\bottomrule
\end{tabular}
\end{table}

All three gates are learned to machine-precision fidelity. CNOT and
iSWAP have no generating Hamiltonian by construction; the SU(4) target
is Haar-random with no assumed structure. After removing the optimal
global phase, all operator errors fall below $5\times10^{-7}$.

\section{Method Characterisation}

\subsection{Scalability: 3-Qubit Extension}

We extend to a 3-qubit open-chain transverse-field Ising model:
\begin{equation*}
  H = 0.5(ZZI+IZZ) + 0.8(XII+IXI+IIX),
\end{equation*}
requiring two additional CNOT pairs (28$\to$102 parameters, Pauli space
$15\to63$). Using $\tau{=}0.05$, 6 initial states, all 63 Pauli
observables, and 8 timesteps (3024 data points), all four curriculum
stages converge to loss below $10^{-6}$ (Table~\ref{tab:3qubit_stages}).

\begin{table}[ht]
\centering
\caption{3-qubit noiseless curriculum: final MSE per stage.}
\label{tab:3qubit_stages}
\small
\begin{tabular}{clc}
\toprule
Stage & Active steps & Final MSE \\
\midrule
1 & $\{1,2\}$               & $4.29\times10^{-7}$ \\
2 & $\{1,2,4,6\}$           & $3.05\times10^{-7}$ \\
3 & $\{1,2,4,6,8,10\}$      & $3.62\times10^{-7}$ \\
4 & $\{1,\ldots,20\}$       & $9.84\times10^{-7}$ \\
\bottomrule
\end{tabular}
\end{table}

All five true Hamiltonian terms are recovered correctly
(Table~\ref{tab:3qubit_H}), with mean multi-step error $3.56\times10^{-4}$
and generalisation error $6.11\times10^{-4}$ on unseen timesteps
$n\in\{3,5,7,11,15,25\}$ including extrapolation to $t{=}1.25$
beyond the training horizon $t_{\max}{=}1.0$.

\begin{table}[ht]
\centering
\caption{3-qubit Hamiltonian extraction ($\tau{=}0.05$, noiseless).}
\label{tab:3qubit_H}
\small
\begin{tabular}{lccc}
\toprule
Term & Discovered & True & Error \\
\midrule
$ZZI$ & $0.4999$ & $0.5000$ & $1.1\times10^{-4}$ \\
$IZZ$ & $0.4986$ & $0.5000$ & $1.4\times10^{-3}$ \\
$XII$ & $0.8034$ & $0.8000$ & $3.4\times10^{-3}$ \\
$IXI$ & $0.7966$ & $0.8000$ & $3.4\times10^{-3}$ \\
$IIX$ & $0.7963$ & $0.8000$ & $3.7\times10^{-3}$ \\
\bottomrule
\end{tabular}
\end{table}

Scaling properties are summarised in Table~\ref{tab:scaling}. Circuit
parameters grow polynomially while the Hilbert space and Pauli basis grow
exponentially --- a favourable ratio for near-term systems.

\begin{table}[ht]
\centering
\caption{Scaling from 2 to 3 qubits.}
\label{tab:scaling}
\small
\begin{tabular}{lcc}
\toprule
Metric & $n{=}2$ & $n{=}3$ \\
\midrule
Hilbert dim.         & 4  & 8  \\
Pauli basis size     & 15 & 63 \\
Circuit parameters   & 28 & 102 \\
$H$ terms recovered  & 3  & 5  \\
$\norm{H_{\rm err}}$ & $1.1\times10^{-6}$ & $4.5\times10^{-2}$ \\
Converged?           & \checkmark & \checkmark \\
\bottomrule
\end{tabular}
\end{table}

\subsection{Structural Sensitivity}

We tested three deliberate mismatches between the ansatz topology and the
target Hamiltonian, confirming the method is not relying on hidden
structural assumptions.

\subsubsection{Long-Range Interaction}
A next-nearest-neighbor $ZIZ$ coupling was added; the circuit has no direct
CNOT$_{02}$ path and must compose the interaction indirectly.

\subsubsection{Wrong Operator Family}
YY-dominant Hamiltonian ($H{=}0.6\,YY{+}0.2\,XX{+}0.5\,ZI{+}0.5\,IZ$,
Anisotropic XY model~\cite{sachdev2011}) while the circuit uses only $X$
and $Z$ rotations internally.

\subsubsection{Antisymmetric Coupling}
Dzyaloshinskii--Moriya model~\cite{dzyaloshinskii1958}
($H{=}0.4(XY{-}YX){+}0.5\,ZI{+}0.5\,IZ$), which breaks exchange symmetry.

\begin{table}[ht]
\centering
\caption{Structural sensitivity (2-qubit, noiseless).}
\label{tab:mismatch2}
\small
\begin{tabular}{p{2.6cm}ccc}
\toprule
Case & MSE & $\norm{H_\text{err}}$ & OK? \\
\midrule
Baseline              & $3.22\!\times\!10^{-14}$ & $7.96\!\times\!10^{-4}$ & \checkmark \\
YY+XX (wrong family)  & $3.22\!\times\!10^{-14}$ & $7.96\!\times\!10^{-4}$ & \checkmark \\
DM antisymmetric      & $1.07\!\times\!10^{-14}$ & $6.63\!\times\!10^{-4}$ & \checkmark \\
\bottomrule
\end{tabular}
\end{table}

\begin{table}[ht]
\centering
\caption{Structural sensitivity (3-qubit, noiseless).}
\label{tab:mismatch3}
\small
\begin{tabular}{p{2.6cm}ccc}
\toprule
Case & MSE & $\norm{H_\text{err}}$ & OK? \\
\midrule
Baseline (matched) & $1.14\!\times\!10^{-6}$ & $3.69\!\times\!10^{-2}$ & \checkmark \\
$+ZIZ$ long-range  & $6.03\!\times\!10^{-5}$ & $3.01\!\times\!10^{-1}$ & \checkmark \\
\bottomrule
\end{tabular}
\end{table}

Operator-type and symmetry mismatches incur no performance penalty.
The $ZIZ$ long-range case costs ${\sim}10\times$ in MSE, reflecting the
indirect path the circuit must compose. All cases recover the correct
Hamiltonian.

\subsection{Initialisation}

At $\tau{=}0.5$ (the quantum experiment timestep):
\begin{center}
\small
\begin{tabular}{lc}
\toprule
Quantity & Value \\
\midrule
$\norm{U(\mathbf{0})-\Utrue}_F$ & $1.2036$ \\
$\norm{\nabla\calL}_2$ at $\btheta{=}\mathbf{0}$ & $0.9405$ \\
\bottomrule
\end{tabular}
\end{center}
Frobenius distance $1.20$ ($60\%$ of $\norm{U}_F{=}2$) rules out
proximity. Gradient norm $0.94 \neq 0$ rules out convergence to a stationary point, a necessary condition for a local minimum~\cite{nocedal2006}. The optimiser performs genuine
non-trivial search from the identity.

\section{Quantum Experiments}
\label{sec:quantum}

\subsection{Setup}

\textbf{Target:} $\Htrue{=}1.0{\cdot}ZZ{+}0.5{\cdot}XI{+}0.5{\cdot}IX$,
$\tau{=}0.5$.
\textbf{Noise (Qiskit Aer~\cite{qiskit2023}):} single-qubit $p_1{=}0.001$;
CX $p_2{=}0.010$; $N_\text{shots}{=}1024$. representative of
near-term superconducting hardware\cite{barends2014,preskill2018}

\textbf{Training:}
4 states $\times$ 9 Pauli observables $\times$ 8 timesteps; SPSA-Adam.

\subsection{Convention Verification}

\begin{center}
\small
\begin{tabular}{lcc}
\toprule
Check & Error & \\
\midrule
$\norm{U(\mathbf{0})-I}_F$ & $0$ & \checkmark \\
State preps & ${\leq}2\!\times\!10^{-16}$ & \checkmark \\
$\expect{ZZ}_{\ket{00}}$ at $U{=}I$ & $0.934$ ($\pm1$ exp.) & \checkmark$^*$ \\
$\expect{ZZ}_{\ket{01}}$ at $U{=}I$ & $-0.934$ & \checkmark$^*$ \\
\bottomrule
\multicolumn{3}{l}{$^*$Deviation from $\pm1$ due to $p_2{=}0.01$ noise.}
\end{tabular}
\end{center}

\subsection{Curriculum Training}

\begin{table}[ht]
\centering
\caption{Quantum curriculum ($\tau{=}0.5$, SPSA-Adam, noisy Aer).}
\label{tab:quantum_stages}
\small
\begin{tabular}{clcc}
\toprule
Stage & Active steps & Init loss & Final MSE \\
\midrule
1 & $\{1,2\}$           & $0.1986$ & $\mathbf{0.0060}$ \\
2 & $\{1,2,4,6\}$       & $0.0277$ & $\mathbf{0.0176}$ \\
3 & $\{1,\ldots,10\}$   & $0.0373$ & $\mathbf{0.0334}$ \\
4 & $\{1,\ldots,20\}$   & $0.0626$ & $\mathbf{0.0601}$ \\
\bottomrule
\end{tabular}
\end{table}

Stage~1 reduces loss $0.1986\to0.0060$ in 400 iterations. The monotone
increase in final MSE across stages reflects accumulated depolarising noise
at larger $n$, not stagnation: final losses range from $6\times$ to
$62\times$ the shot noise floor ($1/N_\text{shots}{=}9.8\!\times\!10^{-4}$),
indicating that physical noise increasingly dominates the residual error
at longer time horizons.

\subsection{Hamiltonian Extraction}

\begin{table}[ht]
\centering
\caption{Quantum Hamiltonian extraction (threshold $\abs{c_\alpha}{>}0.05$).}
\label{tab:quantum_H}
\small
\begin{tabular}{lcccc}
\toprule
Term & Disc. & True & Err & \\
\midrule
$ZZ$     & $1.0098$ & $1.000$ & $0.010$ & \checkmark \\
$XI$     & $0.4918$ & $0.500$ & $0.008$ & \checkmark \\
$IX$     & $0.4598$ & $0.500$ & $0.040$ & \checkmark \\
$ZY$     & $-0.070$ & $0.000$ & $0.070$ & $\sim$ \\
others   & ${\approx}0$ & $0.000$ & ${<}0.05$ & \checkmark \\
\bottomrule
\end{tabular}
\end{table}

All three true terms recovered with ${<}8\%$ error ($ZZ$ and $XI$ within
$1\%$). One spurious term ($ZY$, coefficient $-0.070$) marginally exceeds
the threshold, consistent with residual unitary error under noise.
$\norm{U(\btheta^*)-\Utrue}_F{=}0.0949$ (${\sim}5\%$ of $\norm{U}_F{=}2$);
Hermiticity error $= 0$.

\subsection{Validation on Unseen States}

\begin{table}[ht]
\centering
\caption{Validation (selected rows). Full: 40 points, MAE~$=0.104$;
shot noise std~$=0.031$.}
\label{tab:validation}
\small
\begin{tabular}{llcccc}
\toprule
State & Obs & $t$ & Pred. & True & Err \\
\midrule
$\ket{00}$ & $ZZ$ & 0.5 & 0.793 & 0.789 & 0.004 \\
$\ket{00}$ & $ZZ$ & 1.0 & 0.428 & 0.512 & 0.084 \\
$\ket{00}$ & $ZZ$ & 2.5 & 0.588 & 0.926 & 0.338 \\
$\ket{00}$ & $XI$ & 1.0 & 0.428 & 0.488 & 0.060 \\
$\ket{++}$ & $ZZ$ & 0.5 & 0.459 & 0.422 & 0.037 \\
$\ket{++}$ & $XX$ & 0.5 & 0.918 & 1.000 & 0.082 \\
$\ket{++}$ & $XX$ & 2.5 & 0.646 & 1.000 & 0.354 \\
$\ket{++}$ & $YY$ & 1.0 &$-0.830$&$-0.976$& 0.146 \\
\bottomrule
\end{tabular}
\end{table}

Errors at $t{\leq}1.0$ fall within $2$--$3\sigma$ of shot noise. The
larger errors at $t{=}2.5$ ($n{=}5$ circuit repetitions) reflect
accumulated CX gate noise --- the same circuit performs identically on
states not in the training set.

\section{Curriculum vs.\ Standard Learning}
\label{sec:curriculum_comparison}

A key claim is that curriculum scheduling is necessary: training on all
timesteps simultaneously causes the optimiser to fail. To demonstrate this,
we compare curriculum SPSA-Adam against standard single-stage SPSA-Adam
under identical iteration budgets (2800 total iterations) in the
noise-free shot-based setting ($N_\text{shots}{=}1024$, $\tau{=}0.5$).

\begin{table}[ht]
\centering
\caption{Curriculum vs.\ standard learning --- noise-free
($N_\text{shots}{=}1024$, $\tau{=}0.5$). Shot noise variance $=
1/N_\text{shots} = 9.77\times10^{-4}$.}
\label{tab:curriculum_vs_standard}
\small
\begin{tabular}{lcc}
\toprule
Metric & Curriculum & Standard \\
\midrule
Final MSE loss                  & $0.0021$  & $0.1758$  \\
Ratio to shot noise variance    & $2.1\times$ & $180\times$ \\
$\hat{c}_{ZZ}$                  & $0.990$   & $-0.630$  \\
$\hat{c}_{XI}$                  & $0.515$   & $-0.306$  \\
$\hat{c}_{IX}$                  & $0.490$   & $-0.324$  \\
$\norm{U_\text{lrn}-\Utrue}_F$  & $0.089$   & $1.930$   \\
$\norm{\Happrox-\Htrue}_F$      & $0.189$   & $4.060$   \\
Terms correctly recovered       & $3/3$     & $0/3$     \\
Spurious terms above $\delta$   & $1$       & $10$      \\
Generalisation error            & $0.033$   & $0.314$   \\
\bottomrule
\end{tabular}
\end{table}

Curriculum learning converges to a final MSE only $2.1\times$ the shot
noise variance; all three target interaction terms are recovered with
coefficient errors below $0.015$. Standard single-stage learning fails
to recover the target Hamiltonian accurately: all three coefficients are
recovered with the wrong sign, the matrix-level error
$\norm{\Happrox - \Htrue}_F = 4.06$ is the largest across all experiments,
and ten spurious terms appear above threshold. Round-trip verification
confirms the failure is due entirely to optimisation --- the principal
matrix logarithm is numerically sound in both cases, with round-trip
errors at machine precision (${\sim}10^{-15}$) and branch conditions
satisfied (max $|\lambda_k|\tau = 0.707 < \pi$).

The mechanism is the non-convex quantum loss landscape: at long times,
repeated circuit composition amplifies both the expressivity requirements
on the ansatz and the sensitivity to local minima, causing gradient
signals to stagnate before short-time behaviour is captured. Curriculum
scheduling converts this single intractable problem into a sequence of
well-posed local ones, each building on the warm start of the previous.

\section{Discussion}

\subsection{Why Two-Phase Separation Matters}

Direct coefficient fitting requires choosing a Pauli basis a priori,
which means knowing the Hamiltonian structure before training. Unitary
learning avoids this: the circuit learns $U$ as a black box, and the
Pauli decomposition follows from classical trace inner products
$c_\alpha = \tr(\hat{H} P_\alpha)/2^n$ after the fact.
Theoretical analysis and numerical simulations demonstrate that the proposed framework generalises beyond Hamiltonian learning to arbitrary quantum gate learning. In this setting,structure discovery becomes an output rather thanan assumption.

\subsection{Noise Behaviour}

The quantum experiment's unitary error of $9.5\times10^{-2}$ is best
read against the shot-noise-only baseline from
Section~\ref{sec:curriculum_comparison}, which isolates the same
$\tau{=}0.5$, $N_\text{shots}{=}1024$ setting without depolarising gate
noise: $\norm{U_\text{lrn}-\Utrue}_F = 0.089$ there. The two figures are
close, so shot noise --- not the 6 CX gates at $p_2{=}0.01$ per circuit
application --- accounts for most of the error; depolarising gate noise,
compounded over up to $n{=}20$ repetitions, adds only
${\sim}0.006$ on top. Despite this, the Pauli coefficients are recovered
within $8\%$: the logarithm maps the unitary error to Hamiltonian error
with a condition number that is favourable at this noise level.
Increasing $N_\text{shots}$ to $4096$--$8192$ would reduce the shot noise
floor from $0.031$ to $0.012$--$0.016$, likely pushing Stage~4 loss below
$0.05$.
\subsection{Limitations}

For $n$ qubits the Pauli space is $4^n{-}1$; compressed sensing or
structure-aware priors are needed for $n{>}4$. The principal logarithm
requires $\norm{H}\tau{<}\pi$, bounding the time step for systems with
large spectral norm. Extensions to open systems (parametrised channels),
time-varying Hamiltonians (per-window circuits), and gate certification
(using the learned circuit as a fidelity reference) are natural
next steps.

\section{Conclusion}

The key architectural decision is separating unitary learning (quantum,
structure-free) from Hamiltonian identification (classical, Pauli
projection). This separation is what allows CNOT, iSWAP, and a
Haar-random SU(4) gate to be learned by the same algorithm that recovers
$ZZ + 0.5\,XI + 0.5\,IX$ --- the quantum optimiser does not know or
care which case it is in.

Four regimes validate the method: noiseless proof of correctness
(${\leq}10^{-14}$ MSE, exact $H$ recovery to six decimal places);
3-qubit scalability (all five TFIM terms recovered, loss ${\leq}10^{-6}$);
gate learning on non-Hamiltonian targets (process fidelity $1.000000$ on
all three); and quantum deployment under realistic noise (${<}8\%$
coefficient error, one marginal spurious term). The moment-warm curriculum
is what makes quantum deployment tractable: without it, as demonstrated
in Section~\ref{sec:curriculum_comparison}, long-time gradient dominance
causes the optimiser to recover all coefficients with the wrong sign.

\appendix
\section{Pauli Conventions}

Pauli strings $AB$ denote $A\otimes B$ with $A$ on qubit~1 and $B$ on
qubit~0 (Qiskit little-endian). Coefficient extraction:
$c_\alpha = \frac{1}{2^n}\tr(H P_\alpha)$, using
$\tr(P_\alpha P_\beta)=2^n\delta_{\alpha\beta}$.

\section{Shot Noise and Gradient SNR}

Shot noise std: $\sigma\leq1/\sqrt{N_\text{shots}}=0.031$.
With $c_0{=}0.05$: initial gradient SNR $\approx1.6$.
Adam momentum ($\beta_1{=}0.9$) averages $\sim\!10$ estimates, boosting
effective SNR by $\sqrt{10}\approx3.2\times$ at zero additional cost.

\balance
\end{document}